\newcommand{\aver}[2]{\ensuremath{\left\langle#1\right\rangle_{#2}}}
\newcommand{\kl}[2]{\ensuremath{{D}_{\rm KL}\left[#1\parallel#2\right]}}
\newcommand{\pd}[2]{\ensuremath{\frac{\partial #1}{\partial #2}}}
\newcommand{\tr}{{\ensuremath{\rm tr}\,}}
\newcommand{\capa}{\ensuremath{\mathcal{C}}}
\newcommand{\var}{{\ensuremath{\rm Var}}}
\newcommand{\corr}{{\ensuremath{\rm corr}}}
\newcommand{\T}{{\ensuremath{\sf T}}}
\newcommand{\sgn}{{\ensuremath{\rm sgn}}}
\newcommand{\normal}[2]{\ensuremath{{\mathcal N}\left(#1, #2\right)}}
\begin{document}
\title{Information capacity in the weak-signal approximation}

\author{Lubomir Kostal}
\email[]{kostal@biomed.cas.cz}

\affiliation{Institute of Physiology AS CR, v.v.i., 
Videnska 1083, 142 20 Prague 4, Czech Republic}

\date{August 10, 2010: Accepted for publication in Physical Review E}

\begin{abstract}
We derive an approximate expression for mutual information in a broad
class of discrete-time stationary channels with continuous input,
under the constraint of vanishing input amplitude or power. The
approximation describes the input by  its covariance matrix, while the
channel properties are described by the Fisher information matrix.
This separation of input and channel properties allows us to analyze
the optimality conditions in a convenient way.  We show that input
correlations in memoryless channels do not affect channel capacity
since their effect decreases fast with vanishing input amplitude or
power.  On the other hand, for channels with memory, properly matching
the input covariances to the dependence structure of the noise may
lead to almost noiseless information transfer, even for intermediate
values of the noise correlations.  Since many model systems described
in mathematical neuroscience and biophysics operate in the high noise
regime and weak-signal conditions, we believe, that the described
results are of potential interest also to researchers in these areas.
\end{abstract}

\pacs{89.90.+n, 89.70.Kn}
\maketitle

\section{Introduction}

Information theory is a mathematical framework that provides tools for
quantification of information content and information transfer in
systems defined by general probabilistic rules \citep{r:cover}.  The
theory has been applied successfully to a wide range of problems
\citep{r:verdurev}, including, e.g., classical and quantum computation
and communication \citep{r:bennett, r:cerf, r:gallager}, optical
communication \citep{r:mcdonnellflitney, r:mitra, r:turitsyn} or
quantification of different aspects of information processing in real
neurons and neuronal models \citep{r:atick, r:borstrev, r:laughlin,
r:mcdonnell08prl, r:koslanrosp08plos, r:rieke, r:stein-bj}. 

The measure of information transfer in information theory is
represented by a nonlinear functional of the probability measure over
the joint input-output space \citep{r:cover}.  The concavity of this
functional in the input probability measure has important implications
for numerical approaches to finding the information optimality
conditions \citep{r:cover, r:dauwels, r:ikeda, r:smith71}. On the
other hand, approximations or even closed-form solutions are quite
rare. The classical exact solution for the linear channel with
additive (possibly non-white) Gaussian noise \citep{r:cover, r:yeung}
and input power constraint has been applied in many different
situations. However, in many cases of interest the channel is
significantly nonlinear or non-Gaussian or there are different input
constraints \citep{r:mceliece} and one has to rely on numerical
solutions or approximations.

The  approximations allow us to investigate, although locally and
under perhaps restrictive scenario, the effect of individual
components in the system on the optimality conditions.  In particular,
if the noise in information transfer is substantially low and regular,
there exists a tight lower bound on the information optimality
conditions (denoted as \emph{low-noise} approximation in this paper)
which has been investigated in \citep{r:bernardo79, r:brunelnadal,
r:clarkebarron, r:mcdonnell08prl}.  In this paper we continue the
effort started in \citep{r:koslan10prer} and we describe essentially
the opposite situation: the \emph{high-noise} approximation. Such
approximation is of interest when the signal is very weak compared to
the noise in the information transfer, for example, as in the
classical stochastic resonance effect observed in electrosensory
neurons \citep{r:greenwoodprl, r:koslan10prer}.

\section{Measures of information}

Throughout this paper we assume the discrete-time setting
\citep{r:gallager}, we
denote the consequent channel outputs (responses)
as a vector of random variables (r.v.) $R=(\{R_i\}_{i=1}^{n})^\T$,
which may be discrete or continuous,  $i$ indexes the time and
$(\cdot)^{\T}$ denotes the transposition.  The response, $R_i=r_i$, 
results from the corresponding input $\Theta_i=\theta_i$, where the 
input is also described by a $n$-dimensional r.v.
$\bm\Theta$. The multidimensional description of the process of
information transfer between $\bm\Theta$ and $\mathbf R$
allows us to include the effect of memory, i.e.,
the  dependence on  current and also on past
inputs and responses.  We also assume that the input alphabet 
is continuous \citep{r:gallager}.
In the following we consider stationary channels 
fully described by the
conditional probability density function (p.d.f.)
$f(\mathbf r|\bm\theta)$, which generally
factorizes as \cite{r:ash}
\begin{equation}
f(\mathbf r|\bm\theta)
=
\prod_{i=1}^n
f_i (r_i| 
\theta_i, \theta_{i-1}, \dots, \theta_1,
r_{i-1}, \dots, r_1).
\label{eq:factor}
\end{equation}
We do not consider channel feedback, the
dependence of current input  on past responses \citep{r:cover}.

Mutual information (MI)
is the fundamental quantity measuring information transfer in
channels \citep{r:cover}.
MI $I(\bm\Theta;\mathbf R)$ gives the degree of
statistical dependence between inputs and responses,
defined as
\begin{equation}
I(\bm\Theta;\mathbf R)= 
\aver{\kl{f(\mathbf r|\bm\theta)}{p(\mathbf r)}}{\bm\theta},
\label{eq:mi}
\end{equation}
where 
\begin{equation}
p(\mathbf r)=\aver{f(\mathbf r|\bm \theta)}{\bm\theta}
\label{eq:rmarginal}
\end{equation}
is the marginal joint p.d.f. of responses, and the averaging is
with respect to the input p.d.f., $\pi(\bm\theta)$.
The Kullback-Leibler (KL) divergence is defined as
\begin{equation}
\kl{f(\mathbf r|\bm\theta)}{p(\mathbf r)}=
\aver{\ln \frac{f(\mathbf r|\bm\theta)}{p(\mathbf r)}}
{\mathbf r|\bm\theta},
\label{eq:kl}
\end{equation}
where the averaging is with respect to $f(\mathbf r|\bm\theta)$.
From Eq.~(\ref{eq:mi}) follows, that MI is a
property of the joint distribution of stimuli and
responses.
Of particular interest are the
\emph{optimality conditions}
for information transfer, that is the maximum
value of $I(\bm\Theta; \mathbf R)$
and the  corresponding
optimal $\pi(\bm\theta)$. In order to have a well-posed problem,
one is interested in the optimality conditions for $\bm\Theta$
satisfying certain additional constraints, e.g., average power or 
range of inputs \citep{r:cover, r:mceliece}.
The maximum value of MI per channel use, taken over all
possible stimuli distributions satisfying constraints $\mathcal G$,
is denoted as the information capacity, $\capa$, defined as
\citep{r:mceliece}
\begin{equation}
\capa= \lim_{n\rightarrow\infty} \frac{1}{n}
\left[\sup_{\pi(\bm\theta)\in\mathcal G}
 I(\bm\Theta;\mathbf R) \right].
\label{eq:capa}
\end{equation}
In this paper we interpret $\capa$  
as the upper bound on the rate
at which the information can be transmitted reliably \citep{r:cover},
without
considering the complexity of achieving such maximum rate in practical
terms. Specifically,
do not discuss  the properties of any particular coding and decoding
schemes \citep{r:gallager}. 

Whenever we are interested in reliability of
input-output transmission, we naturally interfere with the
domain of statistical estimation theory \citep{r:kay}. 
Fisher information (FI) matrix, defined as
\begin{equation}
\mathbf J(\bm\theta|\mathbf R)=
\aver{[\nabla \ln f(\mathbf r|\bm\theta)]
[\nabla \ln f(\mathbf r|\bm\theta)]^\T}{\mathbf r|\bm\theta},
\label{eq:fi}
\end{equation}
where
\begin{equation}
\nabla= \left(\pd{}{\theta_1}, \cdots, \pd{}{\theta_n} \right)^\T,
\end{equation}
imposes  limits on the precision of
$\bm\theta$ estimation from the responses by means of the Cramer-Rao
bound, which says 
that for the variance of any unbiased estimator of $\theta_i$ holds
$\var(\hat{\theta}_i)\geq [\mathbf J^{-1}(\bm\theta|\mathbf R)]_{ii}$
\citep{r:kay}.
Generally, FI requires that $f(\mathbf r|\bm\theta)$
is continuously differentiable in $\bm\theta$ \citep{r:kay}. In this
paper, we additionally assume that $f(\mathbf r|\bm\theta)$
is twice continuously differentiable in $\bm\theta$, 
so that the following  conditions hold
\begin{equation}
\int_{\mathbf R} \nabla
f(\mathbf r| \bm\theta)\,d\mathbf r=\mathbf 0,
\quad
\int_{\mathbf R} \nabla\nabla^{\T}
 f(\mathbf r| \bm\theta)\,d\mathbf r=\mathbf 0.
\label{eq:cdiff}
\end{equation}

There is a variety of relationships between FI, MI and KL divergence
established in the literature \citep{r:cover,r:kullback, r:salicru},
further motivated by the fields of information geometry
\citep{r:amari} or stochastic complexity \citep{r:rissanen}.  The
already mentioned \emph{low-noise} approximation to 
MI is constructed by employing the
Cramer-Rao bound \citep{r:bernardo79, r:brunelnadal, r:clarkebarron,
r:mcdonnell08prl}.  Although we demonstrate that the \emph{high-noise}
approximation also involves FI, we never employ the Cramer-Rao bound
and the appearance of FI is  due to certain asymptotic properties
of the KL distance \citep{r:kullback}.

\section{Information transfer by weak signals}

\subsection{Small input amplitude limit}

The channel properties are described by the conditional probability
density $f(\mathbf r| \bm\theta)$, which satisfies the regularity 
conditions~(\ref{eq:cdiff}).
The input, described by r.v. $\bm\Theta$, is restricted in
amplitude,
\begin{equation}
\bm\Theta\in [\bm\theta_0-\Delta\bm\theta,
\bm\theta_0+\Delta\bm\theta],
\label{eq:thetares}
\end{equation}
for chosen $\bm\theta_0$ and $\Delta\bm\theta$, or more precisely in
components:
for all $i$ holds $\Theta_i\in [\theta_0 -\Delta\theta, \theta_0
+\Delta\theta]$ and $\Delta\theta>0$. 
The situation for a memoryless channel
is illustrated in Fig.~\ref{fig:illu}.
The goal is to derive
an approximation to mutual information in the
limit $\|\Delta\bm\theta\|\rightarrow 0$.
We demonstrate in detail in Appendix~A, that the approximation 
(to second order in the input amplitude) can be written as
\begin{equation}
I(\bm\Theta; \mathbf R) 
\approx
\frac{1}{2} \tr\left[
\mathbf J(\bm\theta_0|\mathbf R)
\mathbf C_{\bm\Theta}
\right],
\label{eq:weakmi}
\end{equation}
where  $\mathbf J(\bm\theta_0|\mathbf R)$ is the FI matrix from
Eq.~(\ref{eq:fi}) evaluated at $\bm\theta=\bm\theta_0$,
$\mathbf C_{\bm\Theta}$ is the
covariance matrix of $\bm\Theta$ and $\tr(\cdot)$ is the matrix trace.
Eq.~(\ref{eq:weakmi}), derived also in \cite{r:koslan10prer},
holds for a broad class of channels with
memory, both biologically-inspired and artificial and represents
the main result. 
An important feature of Eq.~(\ref{eq:weakmi}) is, that 
the channel properties (described by the FI matrix) and the input
properties (described by its covariance matrix) are separated.
Therefore, the maximum value of MI  can be found by matching
the corresponding elements of $\mathbf J(\bm\theta_0|\mathbf R)$ and
$\mathbf C_{\bm\Theta}$. The elements of the
covariance matrix of $\bm\Theta$
can be written  as \citep{r:kendall3}
\begin{equation}
[\mathbf C_{\bm\Theta}]_{ik}= \sigma^2 \varrho_{ik},
\label{eq:cm}
\end{equation}
where $\sigma^2\equiv  \sqrt{\var(\Theta_i) \var(\Theta_k)}$
is constant for all $i,k$ due to
stationarity, and $\varrho_{ik}= \corr(\Theta_i, \Theta_k)$ is the
correlation coefficient. The maximum variance of the 
amplitude constrained input from Eq.~(\ref{eq:thetares})
is $\max \sigma^2= (\Delta\theta)^2$ and $-1< \varrho_{ik} < 1$,
thus $I(\bm\Theta; \mathbf R)$ in Eq.~(\ref{eq:weakmi}) is maximized
if
\begin{equation}
\varrho_{ik}\rightarrow \sgn [\mathbf J(\bm\theta_0|\mathbf R)]_{ik},
\label{eq:dcov}
\end{equation}
where $\sgn(\cdot)$ is the sign function.
Note, that the 
diagonal elements of the FI matrix are positive while the
off-diagonal elements can be negative.
It may happen, that
the matrix $\mathbf C_{\bm\Theta}$ formed by Eqns.~(\ref{eq:dcov})
and~(\ref{eq:cm})
is not positive-semidefinite\footnote{
Consider for example the matrix
$\left(
\begin{smallmatrix}
2 & -1 & -1 \\ -1 & 2 & -1 \\ -1 & -1 &2
\end{smallmatrix}\right)
$,
which is positive-semidefinite, while
$
\left(
\begin{smallmatrix}
1 & -1 & -1 \\ -1 & 1 & -1 \\ -1 & -1 &1
\end{smallmatrix}\right)
$
is not. If the desired covariance matrix cannot be formed
from Eq.~(\ref{eq:dcov}), then 
$\capa_{\rm high}$ is less than as given by Eq.~(\ref{eq:weakcapa}).
},
i.e., it cannot be a proper covariance 
matrix \citep{r:horn},
even though $\mathbf J(\bm\theta_0|\mathbf R)$  generally is
positive-semidefinite \citep{r:kay}.
However, in all problems we have calculated so far, proper 
input covariance matrix could be formed, given 
$\mathbf J(\bm\theta_0|\mathbf R)$, and then it holds from
Eqns.~(\ref{eq:capa}) and~(\ref{eq:weakmi})
\begin{equation}
\capa\approx \capa_{\rm high}=
\lim_{n\rightarrow\infty} \frac{(\Delta\theta)^2}{2n}
\sum_{i,k}\left|
[\mathbf J(\bm\theta_0|\mathbf R)]_{ik}
\right|,
\label{eq:weakcapa}
\end{equation}
where $\capa_{\rm high}$ denotes the \emph{high noise} approximation
to the true capacity $\capa$.

\begin{figure}
\begin{center}
\includegraphics[]{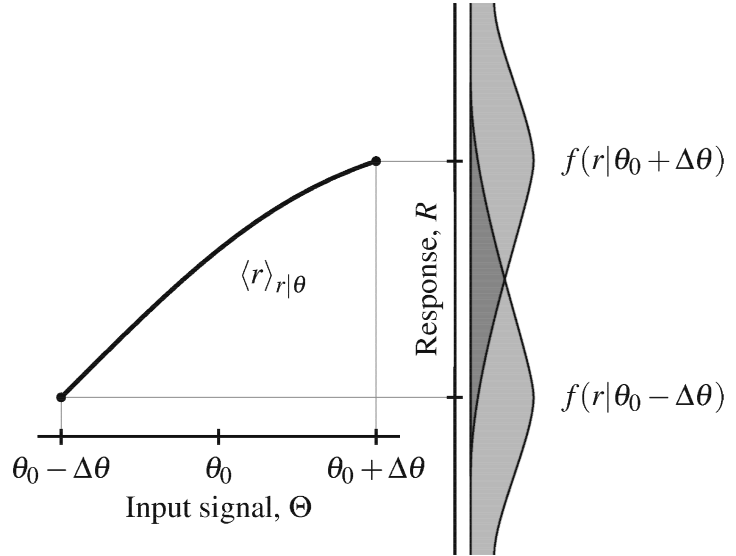}
\end{center}
\newpage
\caption{Information transmission with amplitude-constrained
inputs. The input signal, described by r.v. $\Theta$, is restricted 
to the interval $[\theta_0-\Delta\theta,
\theta_0+\Delta\theta]$. Due to presence of noise, the responses to each
particular $\theta$ vary randomly,  described by the conditional
probability density $f(r|\theta)$. While the memoryless
information channel is
fully described by $f(r|\theta)$, the amount of information
transferred depends on both $f(r|\theta)$ and the distribution of
$\Theta$. We examine the maximum information transfer
by inputs restricted to small amplitudes
 when there is a significant overlap of $f(r|\theta_0- \Delta\theta)$
and $f(r| \theta_0+\Delta\theta)$.
Heuristically, the problem can be also described as the information
transmission in a very noisy environment, or under very low
signal-to-noise ratio conditions.
}
\label{fig:illu}
\end{figure}

For stationary 
memoryless channels $f(\mathbf r|\bm\theta)$ factorizes
due to Eq.~(\ref{eq:factor}) as \citep[p.193]{r:cover}
\begin{equation}
f(\mathbf r|\bm\theta)= \prod_{i=1}^n f(r_i| \theta_i), 
\label{eq:memlfact}
\end{equation}
thus from Eq.~(\ref{eq:fi}) follows that
the FI matrix is diagonal, $J(\theta_0|R)\equiv
[\mathbf J(\bm\theta_0|\mathbf R)]_{ii}= 
\aver{[\partial_{\theta} \ln f(r|\theta)]^2}{r|\theta}$, and from
Eq.~(\ref{eq:weakcapa}) we have
\begin{equation}
\capa_{\rm high}= \frac{(\Delta\theta)^2}{2} J(\theta_0|R),
\label{eq:capamemless}
\end{equation}
a result obtained by different means in \citep{r:verdu-capcost}.
The optimal input p.d.f., $\pi^*(\theta)$, is the maximum variance
distribution over the given input range, 
\begin{equation}
\pi^*(\theta)= \frac{1}{2} \delta(\theta- \theta_0- \Delta\theta)
+ \frac{1}{2} \delta(\theta- \theta_0 +\Delta\theta),
\label{eq:bin}
\end{equation}
where $\delta(\cdot)$ is the Dirac's delta function. In other words,
the capacity is achieved by a binary input, and thus 
$\capa\leq 1$\,bit.

From Eq.~(\ref{eq:weakmi}) follows, that
non-diagonal elements of $\mathbf C_{\bm\Theta}$ do
not affect the information capacity of memoryless channels in the
vanishing input amplitude case. This result is counterintuitive,
because correlations generally decrease the input entropy
\citep{r:cover}. Therefore in the following we 
provide a proof which is independent of Eq.~(\ref{eq:weakmi}).
Let us  consider two consequent uses of a  stationary
memoryless channel, i.e., $\bm\Theta=\{\Theta_1, \Theta_2\}^\T$,
$\mathbf R=\{R_1, R_2\}^\T$. We assume, that the inputs
$\Theta_1$ and $\Theta_2$ are generally statistically dependent,
$(\Theta_1, \Theta_2)\sim \pi(\theta_1, \theta_2)$, and the joint
marginal distribution of responses is denoted as $p(\mathbf r)$,
see also Eq.~(\ref{eq:rmarginal}).
By employing the factorization~(\ref{eq:memlfact}) and
basic relations between entropy, 
$h(\mathbf R)= -\aver{\ln p(\mathbf r)}{\mathbf r}$,
and MI
\citep[p.21]{r:cover} we have
\begin{eqnarray}
I(\bm\Theta; \mathbf R) &=&
h(\mathbf R)- \aver{h(\mathbf R| \bm\theta)}{\bm\theta} 
=\nonumber\\
&=&h(R_1) +h(R_2) - I(R_1; R_2)- 
\nonumber\\
&&-\aver{h(R_1|\theta_1) + h(R_2|\theta_2)}{\bm\theta} 
=\nonumber\\
&=& I(\Theta_1; R_1) + I(\Theta_2; R_2) - I(R_1; R_2) 
=\nonumber\\
&=& 2I(\Theta_1; R_1) - I(R_1; R_2),
\label{eq:2use}
\end{eqnarray}
since $I(\Theta_1; R_1)=I(\Theta_2; R_2)$ due to stationarity.
In other words, the difference in information transfer
when using two dependent or
independent inputs in the memoryless channel is equal to
$I(R_1; R_2)$. Obviously, for
$\Theta_1, \Theta_2$  independent holds $I(R_1; R_2)=0$.
The strength of the dependence between $R_1$ and
$R_2$ for correlated inputs
depends on the input range and the conditional response distributions,
see Fig.~\ref{fig:illu}. We expect $I(R_1; R_2)$ to be maximal
for the extreme input dependence, e.g.,
$\Theta_2=\Theta_1$, where $\Theta_1$ is equiprobably equal either to
$\theta_0-\Delta\theta$ or $\theta_0+\Delta\theta$. It follows, 
that  $R_1, R_2$ are conditionally (given $\Theta_1$)
identically and conditionally independently distributed. 
If $f(r|\theta_0-\Delta\theta)$ and $f(r|\theta_0+\Delta\theta)$ are
well separated, then $I(R_1; R_2)>0$ because $R_2$ provides redundant
information to $R_1$. As $\Delta\theta\rightarrow 0$, then 
$f(r|\theta_0-\Delta\theta)$ and $f(r|\theta_0+\Delta\theta)$ become
(almost) identical due to continuity in $\theta$ and thus 
$I(R_1; R_2)\rightarrow 0$. To make the argument precise, we show
that $I(R_1; R_2)= 0$ to the  second order in
the input amplitude, so that the effect of input correlations
in memoryless channels is of
higher order than the approximate Eq.~(\ref{eq:weakmi}).
The joint response distribution  is
\begin{eqnarray}
p(r_1, r_2)&=& 
\frac{1}{2} f(r_1| \theta_0 +\Delta\theta) f(r_2| \theta_0+ \Delta\theta)
+ \nonumber\\
&& +\frac{1}{2} f(r_1| \theta_0 -\Delta\theta) f(r_2| \theta_0- \Delta\theta),
\label{eq:rj}
\end{eqnarray} 
from which the marginals follow $p(r_1)=
f(r_1| \theta_0 +\Delta\theta)/2 + f(r_1| \theta_0 -\Delta\theta)/2$,
and similarly for $p(r_2)$. We employ another formula for MI
\citep[p.251]{r:cover} 
\begin{equation}
I(R_1; R_2)= \kl{p(r_1, r_2)}{p(r_1) p(r_2)}.
\label{eq:mirp}
\end{equation}
By substituting from Eq.~(\ref{eq:rj}) into 
Eq.~(\ref{eq:mirp}), and by employing the Taylor expansion 
in $\Delta\theta$ around $\Delta\theta=0$, we have (the terms up to
$\Delta\theta$ are zero)
\begin{equation*}
I(R_1; R_2)\approx (\Delta\theta)^2\!\!
\iint\limits_{R_1\times R_2} 
\left.\left[\pd{f(r_1| \theta)}{\theta} 
\pd{f(r_2|\theta)}{\theta}\right]\right|_{\theta=\theta_0}
\!\!\! dr_1\,dr_2,
\end{equation*}
which is equal to zero, due to Eq.~(\ref{eq:cdiff}).
The first nonzero term is of 4-th order,
and can be written as
$
(\Delta\theta)^4 J(\theta_0|R_1) J(\theta_0|R_2)/2
$,
provided that $f(r|\theta)$ is three times continuously
differentiable in $\theta$.

On the other hand, for channels with
memory the input correlations do matter, irrespectively of the
smallness of the amplitude. Consider, for example, two channel uses
in the additive noise case, $R_i= \Theta_i+ Z_i$,
$\aver{Z_i}{}=0$, where $i=1,2$. It is possible to approach the
noiseless channel in the extreme case of matching input and noise
correlations in accord with Eq.~(\ref{eq:weakmi}), e.g.,
if $\corr(Z_1, Z_2)\rightarrow -1$
and $\corr(\Theta_1, \Theta_2)\rightarrow 1$, 
then $R_1= \Theta_1+ Z_1$ and $R_2= \Theta_1 -Z_1$ and so
by adding $R_1 +R_2$ we can recover the value of
$\Theta_1$ perfectly.

\subsection{Small input power limit}

The signal power \citep{r:dorf}, $P_{\bm\Theta}$,
of an input signal described by  r.v. $\bm\Theta$ 
is defined as
\begin{equation}
P_{\bm\Theta}= \frac{1}{n}\aver{\bm\Theta^\T\bm\Theta}{}.
\end{equation}
For the covariance matrix $\mathbf C_{\bm\Theta}$
of r.v. $\bm\Theta$ holds
$
\mathbf C_{\bm\Theta}= 
\aver{(\bm\Theta -\aver{\bm\Theta}{})
(\bm\Theta -\aver{\bm\Theta}{})^\T}{}
$,
and therefore
\begin{equation}
P_{\bm\Theta}= \frac{1}{n}
\left[
\tr \mathbf C_{\bm\Theta} +
\|\aver{\bm\Theta}{}\|^2
\right].
\label{eq:power}
\end{equation}
The information channel is constrained in the input power $P$ if 
 only inputs that satisfy $P\geq P_{\bm\Theta}$ are considered.
It is common  in information theory of power-constrained
channels, to assume $\aver{\bm\Theta}{}=\mathbf 0$, 
then $P_{\bm\Theta}= \tr \mathbf C_{\bm\Theta}/n$
\citep[p.277]{r:cover}, which we  assume here also.
The assumption $\aver{\bm\Theta}{}=\mathbf 0$ results in simpler
notation, although it does not affect the generality of
results.
Due to stationarity, the marginal variances of r.v. $\bm\Theta$ are
constant, $\var(\Theta_i)=const.$ for all $i$, thus we can 
write\begin{equation}
\bm\Theta= \varepsilon \tilde{\bm\Theta},
\label{eq:thetadef}
\end{equation}
where $\var(\tilde{\Theta}_i)=1$ and
$\varepsilon>0$ is the scaling factor. The power of the input is then
$P_{\bm\Theta}=\varepsilon^2$, and the vanishing input power
is achieved by $\varepsilon\rightarrow 0$.

The approximate expression
for MI in the vanishing input power limit is
obtained analogously to the proof presented in Appendix~A, by
expressing $I(\bm\Theta; \mathbf R)$ in terms of the auxiliary r.v.
$\tilde{\bm\Theta}$, and then expanding for $\varepsilon\rightarrow 0$  
around $\varepsilon=0$. Let $\bm\Theta\sim \pi(\bm\theta)$ and 
$\tilde{\bm\Theta}\sim g(\tilde{\bm\theta})$, then 
from Eq.~(\ref{eq:thetadef}) follows $\pi(\bm\theta)= 
g(\bm\theta/\varepsilon)/\varepsilon =g(\tilde{\bm\theta})/\varepsilon$, 
and also $d\bm\theta= \varepsilon\,d\tilde{\bm\theta}$. 
The MI can be written by (analogously to Eq.~(\ref{eq:mutinfcm}))
\begin{equation}
I(\bm\Theta;\mathbf R)= 
\aver{\kl{f(\mathbf r| \varepsilon\tilde{\bm\theta})}
{\aver{f(\mathbf r| \varepsilon\tilde{\bm\theta})}{\tilde{\bm\theta}}}}
{\tilde{\bm\theta}}.
\end{equation}
The rest follows the argument of Appendix~A, although simplified due
to $\aver{\bm\Theta}{}=\mathbf 0$. It is obvious from the general proof,
that the assumption on zero $\aver{\bm\Theta}{}$ is not essential,
only that the vanishing input power is then with respect to
$\aver{\bm\Theta}{}$, so that $\tr \mathbf C_{\bm\Theta}/n$ is
the vanishing power of input fluctuations. Nevertheless, the
approximation is the same in both cases and reads
\begin{equation}
I(\bm\Theta; \mathbf R)\approx 
\frac{\varepsilon^2}{2}
\tr [\mathbf J(\bm\theta_0| \mathbf R) \mathbf C_{\tilde{\bm\Theta}}]
=
\frac{1}{2}
\tr [\mathbf J(\bm\theta_0| \mathbf R) \mathbf C_{\bm\Theta}],
\label{eq:mi-sp}
\end{equation}
where $\aver{\bm\Theta}{}=\bm\theta_0$.

Eqns.~(\ref{eq:weakmi}) and~(\ref{eq:mi-sp}) are identical, although
the assumptions on $\bm\Theta$ are different.
Consider for example the memoryless channel with power
constraint $P\geq\varepsilon^2$ on the input and $\aver{\Theta}{}=0$,
so that Eq.~(\ref{eq:mi-sp})
can be written as 
\begin{equation}
I(\Theta;R)\approx \frac{\varepsilon^2}{2} J(0| R).
\label{eq:memlpow}
\end{equation}
The capacity is achieved by
any distribution of inputs with power
$P_{\Theta}=\varepsilon^2=P$, for example
by the discrete distribution from
Eq.~(\ref{eq:bin}) with $\Delta\theta=\sqrt{P}$, or by  the
Gaussian distribution $\normal{0}{P}$.
Specifically, it is well known that the capacity of a power-constrained
linear additive white Gaussian noise (AWGN) channel is
\citep{r:cover}
\begin{equation}
\capa= \frac{1}{2} \ln \left(1+ \frac{P}{N}\right),
\label{eq:shannon}
\end{equation}
where $P$ is the power constraint on  the input and $N$ is the noise
power, and that the capacity is achieved by a normal distribution
$\normal{0}{P}$. The signal-to-noise ratio (SNR) is then defined as 
${\rm SNR}= P/N$.
By expanding Eq.~(\ref{eq:shannon}) to first order in $P$ for
$P\ll N$ we have $\capa\approx P/N/2$, which corresponds exactly
to Eq.~(\ref{eq:memlpow}), since for the Gaussian additive noise
holds $J(0|R)= 1/N$. A detailed review of AWGN 
channel capacity and its
different approximations for different SNR regimes 
(including the high-noise approximation above) 
can be found in
\cite{r:forney98}.
The conclusion
that in the vanishing input-power limit the capacity of AWGN channel
can be achieved by both discrete and $\normal{0}{P}$ distributions is
not so surprising in the light of some recent research on the AWGN
channels \citep{r:meyn}. It has been shown, that although the optimal
input distribution is generally $\normal{0}{P}$, the capacity can be
near-achieved by a discrete distribution, and specially, if $P\ll
N$ the other possible capacity-bearing 
distribution is indeed binary discrete. The methods
employed in \citep{r:meyn} are, however, different from our approach.
We further discuss the compatibility of Eq.~(\ref{eq:mi-sp}) with
the exact results obtained for non-white AGN channels in the low-input
power regime in the Results section of this paper.

\subsection{Simple lower bound on memoryless channel capacity}

We have demonstrated in the previous sections, that if the input to
the memoryless channel is weak (in amplitude or power), the optimal
distribution is discrete and binary. Therefore the channel capacity
cannot be more than $1$\,bit. Note, however, that the capacity can be
larger than $1$\,bit for channels with memory under certain
circumstances, as we demonstrate in the Results section.

It follows from the proof in Appendix~A, that
the Fisher information arises in Eq.~(\ref{eq:weakmi})
from Taylor-expanding the involved KL
distances in the expression for MI. More precise approximation to
channel capacity, $\capa_{\rm bin}$, can be thus obtained
without Taylor expansions, just by substituting
the discrete input distribution from Eq.~(\ref{eq:bin}) into
Eq.~(\ref{eq:mi}), 
\begin{eqnarray}
\capa_{\rm bin}&=& \frac{1}{2}\kl{f(r|\theta_0 -\Delta\theta)}{p(r)}
+\nonumber\\
&&+ \frac{1}{2}\kl{f(r|\theta_0 +\Delta\theta)}{p(r)},
\label{eq:bincap}
\end{eqnarray}
where  $p(r)= f(r|\theta_0 -\Delta\theta)/2 + f(r|\theta_0
+\Delta\theta)/2$. The parameter $\Delta\theta$ is half of the
maximum input amplitude for amplitude-constrained channels, 
and $\Delta\theta=\sqrt{P}$ for power-constrained channels.

Eq.~(\ref{eq:bincap}) is the lower bound on the true capacity,
$\capa\geq \capa_{\rm bin}$, which holds whether the amplitude (or
power) is small or not. The extension of Eq.~(\ref{eq:bincap}) to
channels with memory is not straightforward, for example 
the calculation of
$\capa_{\rm bin}$ would require numerical evaluation of possibly
high-dimensional integrals which may not be numerically stable
\citep{r:coxreid}. Therefore for channels with memory we propose to
employ Eq.~(\ref{eq:weakmi}) as the simplest method.

\section{Results for selected systems}

\subsection{Memoryless channels}

\subsubsection{Amplitude constrained linear AWGN channel}

The capacity and capacity-bearing input distributions of
the linear AWGN channel,
\begin{equation}
R= \Theta + Z,
\end{equation}
where r.v. $Z$ is zero-mean Gaussian and the input is constrained
in amplitude, 
were studied in detail in \citep{r:smith71}. Contrary to the well
known Eq.~(\ref{eq:shannon}) for the input power constrained channel,
no closed-form expression for capacity exists in the amplitude
constrained version, moreover the optimal input distribution is known
to be discrete with finite set of mass points.

We assume  $\theta_0=0$, the maximal input amplitude is
$2\Delta\theta$, thus the input is bound to lie in the interval
$[-\Delta\theta, \Delta\theta]$. Furthermore we assume
that the power of the noise is $N=1$, so the noise is described by the
standard normal r.v., $Z\sim\normal{0}{1}$.
Eq.~(\ref{eq:capamemless}) then becomes
\begin{equation}
\capa_{\rm high}= \frac{1}{2} (\Delta\theta)^2.
\end{equation}
The binary approximation, $\capa_{\rm bin}$ given by Eq.~(\ref{eq:bincap}),
has to be evaluated numerically.  Additionally, we also investigate
the \emph{low noise} approximation to MI, $\capa_{\rm low}$,
which is also based on FI
\citep{r:bernardo79, r:brunelnadal, r:mcdonnell08prl},
\begin{eqnarray}
\capa_{\rm low}= \ln 
\frac{\int_{\Theta} \sqrt{J(\theta|R)}\,d\theta }
{\sqrt{2\pi e}}.
\label{eq:capalow}
\end{eqnarray}
Eq.~(\ref{eq:capalow}) is a lower bound on the true channel capacity,
$\capa\geq\capa_{\rm low}$,  tight with the vanishing
noise in the information transmission. In the case of
amplitude-constrained AWGN channel we have
\begin{equation}
\capa_{\rm low}= \ln \frac{2\Delta\theta}{\sqrt{2\pi e}}.
\end{equation}

Fig.~\ref{fig:memless}a. shows
the comparison of the exact channel capacity (data taken from
\citep{r:dauwels}) with $\capa_{\rm high}, \capa_{\rm bin}$ and
$\capa_{\rm low}$, expressed as functions of the
signal-to-noise ratio (in dB), which is defined as
\citep{r:dauwels}
\begin{equation}
{\rm SNR}= 10 \log_{10} \left[(\Delta\theta)^2\right].
\end{equation}
The capacities are evaluated in bits  which means
converting the natural logarithms in Eqns.~(\ref{eq:capamemless}),
~(\ref{eq:bincap}) and~(\ref{eq:capalow}) to base~2, i.e., to divide
the values by $\ln 2$.
While $\capa_{\rm low}$ and $\capa_{\rm high}$ provide good
approximations only for rather high and small SNR values, the
$\capa_{\rm bin}$ approximation gives good results even for
intermediate SNR values. A similar figure with additional
approximations for the classical AWGN channel capacity can be found in
\cite{r:forney98}.

\begin{figure}
\begin{center}
\includegraphics[]{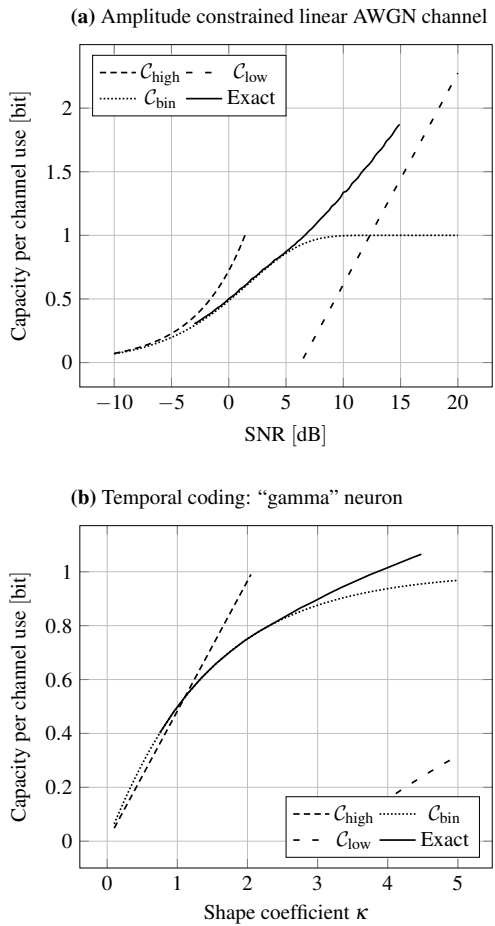}
\end{center}
\newpage
\caption{Capacities and their approximations in memoryless channels.
The high-noise capacity approximation ($\capa_{\rm high}$,
Eq.~(\ref{eq:capamemless})) approximates the true capacity of the
amplitude-constrained AWGN channel \textbf{(a)} well only for very
low signal-to-noise ratios (SNR),
just like the low-noise approximation 
($\capa_{\rm low}$, Eq.~(\ref{eq:capalow})) 
does for high SNRs. The
binary-channel approximation
($\capa_{\rm bin}$, Eq.~(\ref{eq:bincap}))
holds well even for  intermediate-low SNRs.
The exact solution is taken from \cite{r:dauwels}.
The information capacity of a simple model of neuronal
coding \textbf{(b)} apparently falls into the high-noise category,
since both $\capa_{\rm high}$ and $\capa_{\rm bin}$ approximate 
the true capacity (taken from \cite{r:ikeda}) better than
$\capa_{\rm low}$.
}
\label{fig:memless}
\end{figure}

\subsubsection{Temporal neuronal coding}

Recently, the information capacity of a memoryless
neuronal model has been
analyzed in detail \citep{r:ikeda}. It is assumed, that
the neuronal response $R$ is the interval between
two consequent action potentials. In agreement with some experimental
observations \citep{r:levine, r:mckeegan02, r:pouzatstar, r:reeke},
the response for each input follows  the gamma distribution,
\begin{equation}
f(r|\theta)= \frac{r^{\kappa-1}}{\theta^\kappa}
\frac{\exp(-r/\theta)}{\Gamma(\kappa)},
\end{equation}
where the  parameter $\theta$ is assumed to be the input (stimulus
intensity).
Based on further experimental observations \citep{r:shinomoto03},
the input is constrained in amplitude,
$5/\kappa\leq \theta\leq 50/\kappa$. The exact
capacity was calculated  numerically by 
\citet{r:ikeda}
 for $0.75\leq \kappa\leq 4.5$.

While $\capa_{\rm bin}$ has to be evaluated numerically, for the high
and low noise approximations we have
\begin{align}
\capa_{\rm high} &= \frac{81}{242}\kappa,&
\capa_{\rm low} &= \ln \frac{\sqrt{\kappa}\ln 10}{\sqrt{2\pi e}}.
\end{align}
The results are shown in Fig.~\ref{fig:memless}b. 
For the investigated values of $\kappa$, both
$\capa_{\rm high}$ and $\capa_{\rm bin}$ approximations give better
results than $\capa_{\rm low}$, which suggests that this particular 
case of temporal coding falls within the ``high noise'' category. 
Neuronal responses often vary substantially across identical
stimulus trials, thus the highly noisy information transmission
is not unusual as reported from experimental measurements 
\citep{r:carandini}. A simple model of a stochastic resonance
in an electrosensory neuron,
subject to sub-threshold (i.e., very weak)
stimulation \citep{r:greenwoodprl,
r:langreen-bc} has been analyzed  by employing $\capa_{\rm high}$
recently \citep{r:koslan10prer}.

\subsection{Linear Gaussian channel with memory and input power
constraint}

First, we demonstrate that Eq.~(\ref{eq:mi-sp}) is compatible
with exact results available on input power constrained 
linear AGN channels with memory \citep{r:cover,
r:yeung} in the limit
of weak input power. The channel is defined as
\begin{equation}
\mathbf R= \bm\Theta +\mathbf Z,
\label{eq:memagn}
\end{equation}
where the zero-mean input is constrained in power $P$ 
\citep[p.277]{r:cover}, 
\begin{equation}
P\geq \frac{1}{n} \tr \mathbf C_{\bm\Theta},
\label{eq:agnpow}
\end{equation}
and the noise 
is given by the multivariate normal distribution with covariance
matrix $\mathbf C_{\mathbf Z}$, $\mathbf Z\sim \normal{\mathbf
0}{\mathbf C_{\mathbf Z}}$. 
The channel conditional p.d.f. is
therefore
\begin{equation}
f(\mathbf r|\bm\theta) 
=
\frac{1}{\sqrt{(2\pi)^n\det \mathbf C_{\mathbf Z}}}
\exp\left[
(\mathbf r-\bm\theta)^{\T}
\mathbf C_{\mathbf Z}^{-1}
(\mathbf r-\bm\theta)
\right],
\label{eq:mvnormal}
\end{equation}
and substituting Eq.~(\ref{eq:mvnormal}) into Eq.~(\ref{eq:fi}) gives
\citep{r:kay}
\begin{equation}
\mathbf J(\bm\theta|\mathbf R)=\mathbf C_{\mathbf Z}^{-1},
\label{eq:fiagn}
\end{equation}
which is independent of $\bm\theta$.

From the spectral decomposition theorem \citep{r:horn} follows
that
\begin{equation}
\mathbf C_{\mathbf Z}= \mathbf Q\bm\Lambda\mathbf Q^{\T},
\label{eq:czdiag}
\end{equation}
where the matrix $\bm\Lambda$ is diagonal with positive elements
and $\mathbf Q$ is
orthonormal. The capacity per channel use 
is then given by \citep{r:yeung}
\begin{equation}
\capa= \frac{1}{2n} \sum_{i=1}^n
\ln \left(1+\frac{m_{i}}{[\bm\Lambda]_{ii}}\right),
\label{eq:agncap}
\end{equation}
where the constants $m_i\geq 0$  are determined by the
water-filling procedure \citep[p.274]{r:cover}, so
that the power constraint given by Eq.~(\ref{eq:agnpow}) holds as
$
\sum_{i=1}^n m_i= nP
$.
Furthermore, the optimal input distribution  is also multivariate
normal, $\bm\Theta\sim \normal{\mathbf 0}{\mathbf C_{\bm\Theta}}$,
with covariance matrix $\mathbf C_{\bm\Theta}= \mathbf Q\mathbf
M\mathbf Q^\T$ \citep[p.279]{r:yeung}, where the diagonal matrix
$\mathbf M$ is defined as $[\mathbf M]_{ii}=m_i$.

In order to obtain the vanishing input power limit of
Eq.~(\ref{eq:agncap}), we observe that as  $P\rightarrow 0$ 
also $m_i\rightarrow 0$, so we can expand Eq.~(\ref{eq:agncap})
as
\begin{equation}
\capa\approx \frac{1}{2n} \sum_{i=1}^n \frac{m_i}{[\bm\Lambda]_{ii}}
= \frac{1}{2n} \tr\left(\bm\Lambda^{-1}\mathbf M\right).
\label{eq:agnapprox}
\end{equation}
By combining Eqns.~(\ref{eq:fiagn}), (\ref{eq:czdiag}), 
(\ref{eq:agnapprox})
and basic properties of matrix inverse and trace \citep{r:horn}
we have
\begin{eqnarray}
\capa&\approx& 
\frac{1}{2n}
\tr[(\mathbf Q^{\T}\mathbf C_{\mathbf Z}\mathbf Q)^{-1}\mathbf M]=
\frac{1}{2n}
\tr[\mathbf Q^{\T}\mathbf C^{-1}_{\mathbf Z}\mathbf Q\mathbf M]=
\nonumber\\
&=&
\frac{1}{2n}
\tr[\mathbf C^{-1}_{\mathbf Z}\mathbf Q\mathbf M\mathbf Q^\T]=
\frac{1}{2n}
\tr[\mathbf J(\bm\theta| \mathbf R) \mathbf C_{\bm\Theta}],
\label{eq:agncapappr}
\end{eqnarray}
which corresponds to the capacity per channel use
as $n\rightarrow \infty$, due to Eq.~(\ref{eq:mi-sp}), 
for power achieving input,
$\tr \mathbf C_{\bm\Theta}/n=P$.

Next, we  illustrate Eq.~(\ref{eq:agncapappr}) on two simple models
of Gaussian noise with memory.

\subsubsection{AR(1) noise}

The channel is given by Eqns.~(\ref{eq:memagn}) and~(\ref{eq:agnpow}),
with $Z_i$'s following the AR(1) process: $Z_i= \varrho Z_{i-1}
+X_i$, where $-1<\varrho <1$ is the  correlation coefficient,
$\varrho=\corr(Z_i, Z_{i-1})$, and $X_i$ are independently
distributed standard normal r.v.'s, $X_i \stackrel{\rm i.i.d.}{\sim}
\normal{0}{1}$ \citep{r:kendall3}. 
The noise covariance matrix  has elements
\begin{equation}
[\mathbf C_{\mathbf Z}]_{ik}= \varrho^{|i-k|},
\end{equation}
and its inverse, equal to the FI matrix by Eq.~(\ref{eq:fiagn}),
is tridiagonal,
\begin{equation}
\mathbf J(\bm\theta|\mathbf R)= \frac{1}{1-\varrho^2}
\left(\begin{array}{cccccc}
1 & -\varrho & 0 &  \cdots & 0\\
-\varrho & 1+\varrho^2 & -\varrho & \cdots & 0\\
0 &  -\varrho & \ddots & \ddots & \vdots \\ 
\vdots & \vdots & \ddots & 1+\varrho^2  & -\varrho \\ 
0 & 0 & 0 & -\varrho & 1
\end{array}\right).
\label{eq:fiar1}
\end{equation}
We denote the correlation coefficient between consequent inputs as
$c=\corr(\Theta_i, \Theta_{i+1})$. The MI per channel use
for maximum power achieving input, $P=\tr \mathbf C_{\bm\Theta}/n$,
can be found exactly by employing Eq.~(\ref{eq:mi-sp}),
\begin{eqnarray}
\lim_{n\rightarrow\infty} \frac{1}{n} I(\bm\Theta; \mathbf R) =
\frac{P}{2} \frac{\varrho^2 +1 -2c\varrho}{1-\varrho^2}.
\label{eq:miar2}
\end{eqnarray}
For $\varrho=0$ (memoryless channel)
the value of $c$ does not matter as discussed earlier.
The capacity per channel use is 
\begin{equation}
\capa_{\rm high} 
= \frac{P}{2} \frac{\varrho^2 +1 +2 |\varrho|}{1-\varrho^2},
\label{eq:miarcap}
\end{equation}
since $\sup_{-1< c< 1}(-c\varrho)= |\varrho|$.
The capacity in bits per vanishing input power, $\capa_{\rm high}/P$,
is shown in Fig.~\ref{fig:memcap} in dependence on the noise
correlation $\varrho$. Note that from Eq.~(\ref{eq:miarcap}) follows
$\capa_{\rm high}/P\rightarrow \infty$ as $|\varrho|\rightarrow 1$,
i.e., as the noise correlation increases, its corrupting power
decreases and in the limit we can approach the noiseless channel.

\begin{figure}[]
\begin{center}
\includegraphics[]{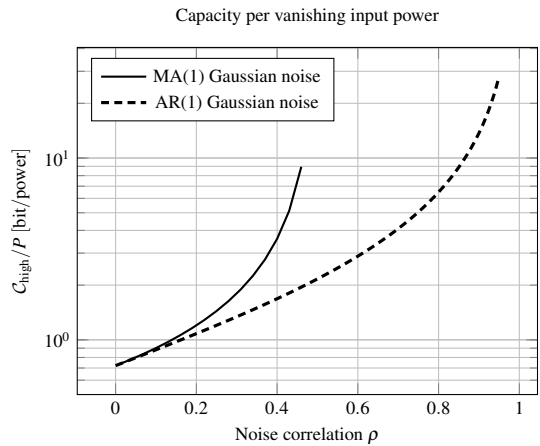}
\end{center}
\caption{The capacities per vanishing input powers
for the AR(1) and MA(1) Gaussian additive noise models  
in dependence on the the noise correlation coefficient
$\varrho$ (the graphs are symmetric in $\varrho$). 
Note that  the capacity tends to infinity as
$|\varrho|\rightarrow 0.5$ (the MA(1) model)
and as $|\varrho|\rightarrow 1$ (the AR(1) model). In these limits,
the corrupting power of the noise in the information transfer is
decreased to the point, that the channel approaches the noiseless
channel and the input value can be recovered perfectly.
}
\label{fig:memcap}
\end{figure}

\subsubsection{MA(1) noise}

The channel is given by Eqns.~(\ref{eq:memagn}) and~(\ref{eq:agnpow}),
r.v.'s $Z_i$ follow the MA(1) process, $Z_i= X_i -\gamma X_{i-1}$,
where $-1< \gamma <1$ is the parameter of the process and $X_i
\stackrel{\rm i.i.d.}{\sim} \normal{0}{1}$.  The parameter of the
MA(1) process and the correlation coefficient
$\varrho=\corr(Z_i, Z_{i-1})$ are related as $\varrho=
-\gamma/(1+\gamma^2)$, and therefore $-0.5< \varrho < 0.5$
\citep{r:kendall3}.  The covariance matrix of the MA(1) process is
tridiagonal, and its inverse has all elements non-zero, although
decreasing in absolute value with the distance from the main 
diagonal, see Fig.~\ref{fig:mem}a,~b.

Recently, a closed form expression for $\mathbf
C^{-1}_{\mathbf Z}$ of the MA(1) process has been published
\citep{r:kumar}. The expression is rather complicated and we
cannot evaluate the analogous limit to Eq.~(\ref{eq:miar2}) in a
closed form. Nevertheless, we approximate the capacity per
channel use by considering $n$ high enough, and the closed form
expression for the elements of the FI matrix allows us
to avoid numerical issues when inverting the covariance
matrix. 
The capacity per vanishing input power, $\capa_{\rm high}/P$,
is shown in Fig.~\ref{fig:memcap}. Note, that for $n\leq 2000$ we were
unable to obtain stable values of $\capa_{\rm high}$ for 
$|\varrho|>4.2$. This is caused by the fact, that the dominant terms
of the FI matrix, and consequently $\capa_{\rm high}/P$,
diverge to $+\infty$ as $|\varrho|\rightarrow
0.5$ (in a similar way as Eq.~(\ref{eq:miarcap}) does for 
$|\varrho|\rightarrow 1$). In other words, the dependence structure of
the MA(1) process is sufficiently ``rigid'' even for intermediate
correlation values, that by properly matching the input correlations
we can  approach the noiseless information transfer. 
The examples of optimally matched input signals are shown in
Fig.~\ref{fig:mem}c,~d,~e.

\begin{figure}[]
\begin{center}
\includegraphics[]{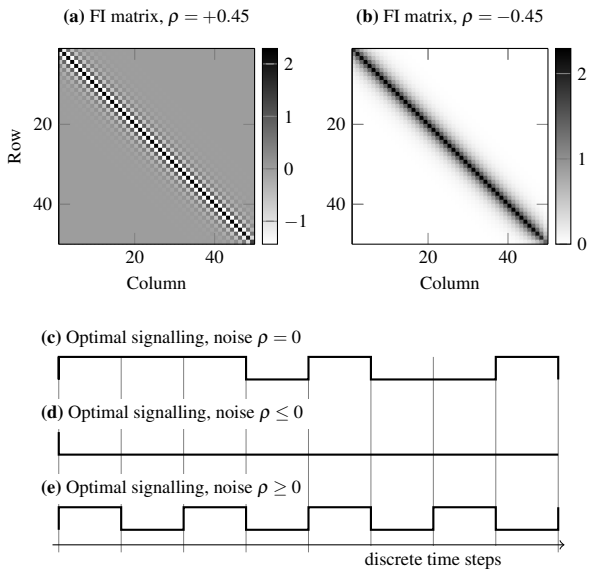}
\end{center}
\caption{Small input amplitude optimality conditions for 
linear channels with AR(1) or MA(1) additive Gaussian noise.
The  structure of the 
Fisher information matrix of the MA(1) model (panels
\textbf{(a)}
and \textbf{(b)} for $n=50$) shows elements decaying in 
absolute value with
distance from the main diagonal, sign changes occur for positively
correlated MA(1) process and all elements are positive for
$\varrho\leq 0$. The structure of the FI matrix determines the
covariance matrix of the optimal signal. Panel \textbf{(c)} shows the
example optimal input to the memoryless channel 
(noise  correlation $\varrho=0$): random switching
between input values $+\sqrt{P}$ and
$-\sqrt{P}$ (discrete binary input), where $P$ is the
input power constraint.
The same capacity would be achieved by input values described
by the normal distribution $\normal{0}{P}$, as discussed in the text.
Depending on the sign of the
noise correlation $\varrho$, the optimal input is characterized
by extremal value of correlation between consequent inputs (panels 
\textbf{(d)} and \textbf{(e)}). Note, that the capacity of the
memoryless channel is achieved by \textbf{(d)} and \textbf{(e)} also,
independently on the input correlations.
}
\label{fig:mem}
\end{figure}

\section{Conclusions}
We derive  approximate expression for mutual information 
in a broad class of discrete-time stationary channels
(including those with memory) with continuous, but small, input. The
input is restricted either in amplitude or in power and we study the
optimality conditions on information transfer as the power or
amplitude approach zero.  We find that the input and channel
properties are separated in the approximate formula, which allows us
to study the optimality conditions in a convenient way. Specifically,
we find that the increase of mutual information from zero power (or
amplitude) for a given channel depends only on the input covariances.

For memoryless channels, the capacity cannot be more than $1$\,bit per
channel use and the optimal input is  unique discrete binary
distribution in the small input amplitude case, but generally
non-unique in the small input power case. We demonstrate, that the
effect of input correlations in memoryless channels is of higher order
than the order of the capacity approximation, and thus the additional
correlations do not decrease the capacity although they decrease the
input entropy. We also provide a simple lower bound on capacity of
memoryless channels subject to weak-stimulus constraints that gives
better results in practical situations. 

In channels with memory, the capacity can be greater than $1$\,bit and
the input correlations play the most important role. We show, that the 
approximate formula includes the small input power limit of the exact
solution for linear additive Gaussian noise channels with
memory. We show, that by properly matching the input covariances to
the dependence structure of the noise, we can approach in certain
cases the noiseless channel even for intermediate values of the noise
correlations.

\begin{acknowledgments}
This work was supported by AV0Z50110509 and  Centre for Neuroscience
LC554. I thank Ales Nekvinda for helpful comments on the Appendix.
\end{acknowledgments}

\appendix
\section{Capacity in the vanishing input amplitude}

We introduce an auxiliary r.v. $\delta\bm\Theta$ 
by employing Eq.~(\ref{eq:thetares}) as
\begin{equation}
\delta\bm\Theta= \bm\Theta- \bm\theta_0,
\label{eq:dtheta}
\end{equation}
so that for all $i$ holds $\delta\theta_i\in [-\Delta\theta,
\Delta\theta]$. The p.d.f. of r.v.
$\delta\bm\Theta$ is denoted as $\pi(\delta\bm\theta)$.
Mutual information $I(\bm\Theta;\mathbf R)$ 
from Eq.~(\ref{eq:mi}) can be  written in terms of r.v. 
$\delta\bm\Theta$,
whether $\|\Delta\bm\theta\|$ is small or not as
\begin{equation}
I(\bm\Theta;\mathbf R)=
\aver{\kl{f(\mathbf r|\bm\theta_0+\delta\bm\theta)}
{\aver{f(\mathbf r|\bm\theta_0+ \delta\bm\theta)}{\delta\bm\theta}}
}{\delta\bm\theta}.
\label{eq:mutinfcm}
\end{equation}
In order to approximate
$I(\bm\Theta;\mathbf R)$   around
$\bm\theta_0$ in terms of $\delta\bm\theta$ for small $\|\Delta\bm\theta\|$,
we need to expand the KL distance in
Eq.~(\ref{eq:mutinfcm}). We introduce 
\begin{eqnarray}
\varphi(\mathbf r, \bm\theta_0+\delta\bm\theta)&=&
f(\mathbf r|\bm\theta_0+\delta\bm\theta) \ln
f(\mathbf r|\bm\theta_0+\delta\bm\theta),
\\
\psi(\mathbf r, \bm\theta_0+\delta\bm\theta)&=&
f(\mathbf r|\bm\theta_0+\delta\bm\theta) \ln
\aver{f(\mathbf r|\bm\theta_0+ \delta\bm\theta)}{\delta\bm\theta},
\end{eqnarray}
and rewrite the KL distance as
\begin{multline}
\kl{f(\mathbf r|\bm\theta_0+\delta\bm\theta)}
{\aver{f(\mathbf r|\bm\theta_0+ \delta\bm\theta)}{\delta\bm\theta}}
=\\
=
\int_{\mathbf R}
\left[
\varphi(\mathbf r, \bm\theta_0+\delta\bm\theta)
- \psi(\mathbf r, \bm\theta_0+\delta\bm\theta)
\right]\,d\mathbf r,
\label{eq:klpp}
\end{multline}
thus reducing the problem to expanding $\varphi(\mathbf r, \bm\theta)$
and $\psi(\mathbf r, \bm\theta)$. While the
Taylor expansion of $\varphi(\mathbf r, \bm\theta)$
is straightforward, the expansion of the
logarithm of the expected value of $f(\mathbf r| \bm\theta)$ in
$\psi(\mathbf r, \bm\theta)$ is examined in the following Lemma.

\newtheorem{lem}{Lemma}
\begin{lem}\label{lem}
Let  $f(\mathbf r| \bm\theta)$ be twice
continuously differentiable 
with respect to $\bm\theta$. Then
for a chosen $\bm\theta_0$, r.v. $\delta\bm\Theta\sim
\pi(\delta\bm\theta)$ and $\Delta\bm\theta$ 
such,  that for all $i$ holds $\Delta\theta>0$ and
$-\Delta\theta\leq \delta\theta_i \leq \Delta\theta$,
there
exists $P>0$ such, that the following approximation 
for small enough $\|\Delta\bm\theta\|$
holds
\begin{equation}
\ln
\aver{f(\mathbf r|\bm\theta_0+ \delta\bm\theta)}{\delta\bm\theta}
\approx \ln f(\mathbf r| \bm\theta_0) 
+ \aver{\delta\bm\Theta}{}^\T 
\frac{\nabla f(\mathbf r| \bm\theta_0)}{f(\mathbf r| \bm\theta_0)},
\label{eq:celkem}
\end{equation}
where  $\nabla f(\mathbf r| \bm\theta_0)=\left.
\nabla f(\mathbf r| \bm\theta)\right|_{\bm\theta=\bm\theta_0}$,
the gradient is taken with respect to $\bm\theta$ and
$\aver{\delta\bm\Theta}{}= \aver{\delta\bm\Theta}{\delta\bm\theta}$
is the expectation of r.v. $\delta\bm\Theta$.
The maximum error of expansion~(\ref{eq:celkem}) is
bounded by $P\|\Delta\bm\theta\|^2$. 
\end{lem}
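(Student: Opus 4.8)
The plan is to expand the integrand $f(\mathbf r|\bm\theta_0+\delta\bm\theta)$ to second order in $\delta\bm\theta$ via Taylor's theorem with remainder, take the expectation over $\delta\bm\Theta$ term by term, and then feed the result into the logarithm using the expansion $\ln(1+x)\approx x$ valid for small $x$. Concretely, I would write
\begin{equation*}
f(\mathbf r|\bm\theta_0+\delta\bm\theta)
= f(\mathbf r|\bm\theta_0)
+ \delta\bm\theta^\T \nabla f(\mathbf r|\bm\theta_0)
+ \tfrac12 \delta\bm\theta^\T \nabla\nabla^\T f(\mathbf r|\bm\theta_c)\,\delta\bm\theta,
\end{equation*}
where $\bm\theta_c$ lies on the segment between $\bm\theta_0$ and $\bm\theta_0+\delta\bm\theta$. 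Averaging over $\delta\bm\Theta$ replaces the linear term by $\aver{\delta\bm\Theta}{}^\T\nabla f(\mathbf r|\bm\theta_0)$ and leaves a quadratic remainder term that is bounded, componentwise, by a constant multiple of $\|\Delta\bm\theta\|^2$ because each $|\delta\theta_i|\le\Delta\theta$. Dividing through by $f(\mathbf r|\bm\theta_0)$ gives
\begin{equation*}
\frac{\aver{f(\mathbf r|\bm\theta_0+\delta\bm\theta)}{\delta\bm\theta}}{f(\mathbf r|\bm\theta_0)}
= 1 + \aver{\delta\bm\Theta}{}^\T \frac{\nabla f(\mathbf r|\bm\theta_0)}{f(\mathbf r|\bm\theta_0)} + \mathcal O(\|\Delta\bm\theta\|^2),
\end{equation*}
and taking $\ln$ of both sides, using $\ln f(\mathbf r|\bm\theta_0+\ldots)=\ln f(\mathbf r|\bm\theta_0)+\ln(\text{ratio})$ together with $\ln(1+x)=x+\mathcal O(x^2)$, yields exactly Eq.~(\ref{eq:celkem}) with an error of order $\|\Delta\bm\theta\|^2$.

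The key steps, in order: (i) invoke the second-order Taylor expansion with Lagrange (or integral) remainder for $f$ as a function of $\bm\theta$, legitimate since $f$ is twice continuously differentiable; (ii) interchange expectation over $\delta\bm\Theta$ with the $\mathbf r$-pointwise expansion — this is just linearity of the finite expectation, no Fubini subtlety since $\delta\bm\Theta$ ranges over a bounded box; (iii) bound the averaged quadratic remainder: $|\aver{\delta\bm\Theta^\T H(\bm\theta_c)\delta\bm\Theta}{}|\le n^2 (\Delta\theta)^2 \sup\|H\|$, where the supremum of the Hessian entries of $f$ over the compact box around $\bm\theta_0$ is finite by continuity; (iv) absorb this into the $\ln$ via $\ln(1+x)=x+\mathcal O(x^2)$, noting that the cross term $\aver{\delta\bm\Theta}{}^\T\nabla f/f$ is itself $\mathcal O(\|\Delta\bm\theta\|)$, so its square is also $\mathcal O(\|\Delta\bm\theta\|^2)$ and gets lumped into the same error bound; (v) collect constants into a single $P>0$.

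The main obstacle I expect is making the error bound genuinely uniform in $\mathbf r$ — or, more honestly, clarifying in what sense the $\mathcal O(\|\Delta\bm\theta\|^2)$ control holds, since $f(\mathbf r|\bm\theta_0)$ appears in a denominator and $\nabla f/f$ need not be bounded as $\mathbf r$ ranges over all of $\mathbf R$. The statement as written asserts a bound $P\|\Delta\bm\theta\|^2$ on "the maximum error," so presumably $P$ is allowed to depend on $\mathbf r$ (and the point is that the error vanishes quadratically, pointwise in $\mathbf r$, which is all that is needed downstream when this expansion is substituted into Eq.~(\ref{eq:klpp}) and integrated against $f(\mathbf r|\bm\theta)$). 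I would handle this by fixing $\mathbf r$ throughout, treating everything as a statement about the scalar function $\bm\theta\mapsto\ln\aver{f(\mathbf r|\bm\theta_0+\delta\bm\theta)}{\delta\bm\theta}$, and letting the constant $P=P(\mathbf r)$ collect the local Hessian bound of $f(\cdot\,|\cdot)$ near $\bm\theta_0$ divided by $f(\mathbf r|\bm\theta_0)$ together with the $\ln$-remainder constant; the mild positivity requirement $f(\mathbf r|\bm\theta_0)>0$ is implicit wherever the ratio is formed. The remaining steps are then routine Taylor bookkeeping.
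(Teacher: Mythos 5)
Your proposal is correct and follows essentially the same route as the paper's proof: a first-order Taylor expansion of $f(\mathbf r|\bm\theta_0+\delta\bm\theta)$ under the expectation, with the quadratic remainder controlled by the (locally bounded) second derivatives and by $|\delta\theta_i|\leq\Delta\theta$, followed by an expansion of the logarithm whose remainder constants --- which, as you correctly note, depend on $\mathbf r$ through $f(\mathbf r|\bm\theta_0)$ and $\nabla f(\mathbf r|\bm\theta_0)$ --- are absorbed into $P$. The only cosmetic difference is that the paper performs the logarithm step in two stages (a Lagrange mean-value bound comparing $\ln$ of the exact and linearized averages, then a Taylor bound for $\ln(a+x)$, combined by the triangle inequality), whereas you normalize by $f(\mathbf r|\bm\theta_0)$ and apply $\ln(1+x)=x+\mathcal O(x^2)$ once, which amounts to the same estimate.
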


\begin{proof}
From the continuity of second derivatives of $f(\mathbf r| \bm\theta)$
around $\bm\theta_0$ follows
\begin{equation}
\left|\pd{^2 f(\mathbf r|\bm\theta)}{\theta_i \,\partial\theta_j}\right|
\leq M,
\end{equation}
for all $i,j$.
The Taylor expansion of $f(\mathbf r|\bm\theta)$ around $\bm\theta_0$
in terms of $\delta\bm\theta$ reads
\begin{equation}
f(\mathbf r| \bm\theta_0 + \delta\bm\theta)
\approx
f(\mathbf r| \bm\theta_0) +\delta\bm\theta^{\T}
\nabla f(\mathbf r| \bm\theta_0),
\label{eq:tayl1}
\end{equation}
and furthermore
\begin{multline}
\left|
f(\mathbf r| \bm\theta_0 + \delta\bm\theta)
- f(\mathbf r| \bm\theta_0) -\delta\bm\theta^{\T}
\nabla f(\mathbf r| \bm\theta_0)
\right|
\leq \\
\leq nM \|\delta\bm\theta\|^2 \leq C\|\Delta\bm\theta\|^2.
\label{eq:tayl1err}
\end{multline}
By integrating the expansion~(\ref{eq:tayl1}), i.e., 
by taking the expectation with respect to r.v.
$\delta\bm\Theta$, and by employing inequality~(\ref{eq:tayl1err})
it can be established that
\begin{widetext}
\begin{multline}
\left|
\int_{\mathbf R} \pi(\delta\bm\theta)
f(\mathbf r| \bm\theta_0 + \delta\bm\theta)
\,d(\delta\bm\theta)
- f(\mathbf r| \bm\theta_0) 
- \aver{\delta\bm\Theta}{}^\T\nabla f(\mathbf r| \bm\theta_0) 
\right| 
= \\
= \left|
\int_{\mathbf R} \pi(\delta\bm\theta)
\left[
f(\mathbf r| \bm\theta_0 + \delta\bm\theta)
- f(\mathbf r| \bm\theta_0) -\delta\bm\theta^{\T}
\nabla f(\mathbf r| \bm\theta_0) \right]
\,d(\delta\bm\theta)
\right|  \leq\int_{\mathbf R} \pi(\delta\bm\theta)
C\|\Delta\bm\theta\|^2
\,d(\delta\bm\theta) =C\|\Delta\bm\theta\|^2,
\label{eq:tayl2err}
\end{multline}
and therefore the following expansion holds
\begin{equation}
\int_{\mathbf R} \pi(\delta\bm\theta)
f(\mathbf r| \bm\theta_0 + \delta\bm\theta)
\,d(\delta\bm\theta)
\approx
f(\mathbf r| \bm\theta_0) 
+ \aver{\delta\bm\Theta}{}^\T\nabla f(\mathbf r| \bm\theta_0),
\label{eq:tayl2}
\end{equation}
with the maximum error of order $\|\Delta\bm\theta\|^2$.
From the Lagrange mean value theorem follows, that for 
$A, B>0$ holds
\begin{equation}
|\ln A -\ln B|\leq \frac{1}{\min(A,B)} |A-B|.
\label{eq:lmeanv}
\end{equation}
We set
$
A = \int_{\mathbf R} \pi(\delta\bm\theta)
f(\mathbf r| \bm\theta_0 + \delta\bm\theta)
\,d(\delta\bm\theta),
B = f(\mathbf r| \bm\theta_0) 
+ \aver{\delta\bm\Theta}{}^\T\nabla f(\mathbf r| \bm\theta_0),
$
and combine the inequalities~(\ref{eq:tayl2err}) and~(\ref{eq:lmeanv})
to obtain
\begin{eqnarray}
|\ln A -\ln B|
&=& \bigg|
\ln\int_{\mathbf R} \pi(\delta\bm\theta)
f(\mathbf r| \bm\theta_0 + \delta\bm\theta)
\,d(\delta\bm\theta) - \ln\left[
f(\mathbf r| \bm\theta_0) 
+ \aver{\delta\bm\Theta}{}^\T\nabla f(\mathbf r| \bm\theta_0) 
\right]
\bigg| \leq\nonumber\\ 
&\leq& \frac{1}{\min(A,B)} |A-B|
\leq \frac{1}{\min(A,B)} C\|\Delta\bm\theta\|^2,
\label{eq:p1}
\end{eqnarray}
where $\min(A,B)$ is finite due to regularity of $f(\mathbf r|
\bm\theta)$.
From the Taylor expansion of $\ln(a+x)$ around $a$ in terms of $x$ and
the expression for the Lagrange remainder \citep{r:abramowitz} we have
\begin{equation}
\left|
\ln(a+x) - \ln(a) -\frac{x}{a}
\right|\leq \frac{x^2}{a^2}.
\end{equation}
Setting $a= f(\mathbf r| \bm\theta_0)$ and
$x= \aver{\delta\bm\Theta}{}^\T \nabla f(\mathbf r|\bm\theta_0)$ thus
gives
\begin{equation}
\left|
\ln \left[
f(\mathbf r| \bm\theta_0) + 
\aver{\delta\bm\Theta}{}^\T \nabla f(\mathbf r|\bm\theta_0)
\right]
- \ln f(\mathbf r| \bm\theta_0) 
- \frac{\aver{\delta\bm\Theta}{}^\T 
	\nabla f(\mathbf r|\bm\theta_0)}{f(\mathbf r| \bm\theta_0) }
\right| 
\leq 
\frac{\|\nabla f(\mathbf r|\bm\theta_0)\|^2}{f^2(\mathbf r| \bm\theta_0)}
\|\Delta\bm\theta\|^2.
\label{eq:p2}
\end{equation}
Finally, we apply the triangle inequality for absolute value,
$
|\alpha -\beta|\leq |\alpha -\gamma| + |\gamma -\beta|,
$
setting
\begin{align}
\alpha&= \ln A = \ln \int_{\mathbf R} \pi(\delta\bm\theta)
f(\mathbf r| \bm\theta_0 + \delta\bm\theta)
\,d(\delta\bm\theta),& 
\beta &= \ln f(\mathbf r| \bm\theta_0) + 
	\frac{\aver{\delta\bm\Theta}{}^\T 
	\nabla f(\mathbf r|\bm\theta_0)}{f(\mathbf r| \bm\theta_0) },& 
\\
\gamma&= \ln B = \ln\left[
f(\mathbf r| \bm\theta_0) 
+ \aver{\delta\bm\Theta}{}^\T\nabla f(\mathbf r| \bm\theta_0)
\right],&
\end{align}
and by combining inequalities~(\ref{eq:p1}) and~(\ref{eq:p2})
we obtain
\begin{multline}
\left|
\ln \int_{\mathbf R} \pi(\delta\bm\theta)
f(\mathbf r| \bm\theta_0 + \delta\bm\theta)
\,d(\delta\bm\theta)
- \ln f(\mathbf r| \bm\theta_0) - 
	\frac{\aver{\delta\bm\Theta}{}^\T 
	\nabla f(\mathbf r|\bm\theta_0)}{f(\mathbf r| \bm\theta_0)}
\right| \leq\\
\leq 
\left|
\ln \int_{\mathbf R} \pi(\delta\bm\theta)
f(\mathbf r| \bm\theta_0 + \delta\bm\theta)
\,d(\delta\bm\theta)
- \ln\left[
f(\mathbf r| \bm\theta_0) 
+ \aver{\delta\bm\Theta}{}^\T\nabla f(\mathbf r| \bm\theta_0)
\right]
\right| + \\ 
+\left|
\ln\left[
f(\mathbf r| \bm\theta_0) 
+ \aver{\delta\bm\Theta}{}^\T\nabla f(\mathbf r| \bm\theta_0)
\right]
- \ln f(\mathbf r| \bm\theta_0) - 
	\frac{\aver{\delta\bm\Theta}{}^\T 
	\nabla f(\mathbf r|\bm\theta_0)}{f(\mathbf r| \bm\theta_0)}
\right| \leq\\
\leq
\frac{1}{\min(A,B)} C\|\Delta\bm\theta\|^2
+\frac{\|\nabla f(\mathbf r|\bm\theta_0)\|^2}{f^2(\mathbf r| \bm\theta_0)}
\|\Delta\bm\theta\|^2= P\|\Delta\bm\theta\|^2,
\end{multline}
\end{widetext}
and therefore
\begin{equation}
\ln\aver{f(\mathbf r| \bm\theta_0 + \delta\bm\theta)}{\delta\bm\theta}
\approx
\ln f(\mathbf r| \bm\theta_0) +\aver{\delta\bm\Theta}{}^{\T}
\frac{\nabla f(\mathbf r| \bm\theta_0)}{f(\mathbf r| \bm\theta_0)},
\end{equation}
with error of order $\|\Delta\bm\theta\|^2$.
\end{proof}

In the following we set
$\varphi\equiv\varphi(\mathbf r, \bm\theta_0+\delta\bm\theta)$, 
$\psi\equiv\psi(\mathbf r, \bm\theta_0+\delta\bm\theta)$,
$f\equiv f(\mathbf r|\bm\theta_0)$ and
$\nabla f\equiv \left.\nabla 
f(\mathbf r|\bm\theta)\right|_{\bm\theta= \bm\theta_0}$ for shorthand,
and by repeatedly applying Lemma~\ref{lem} and 
keeping in mind the rules for derivatives $(fg)''= f''g+2 f'g' +fg''$,
and $(\ln f)''= f''/f- (f'/f)^2$, we obtain the expansions
\begin{eqnarray}
\varphi &\approx&
f\ln f + \delta\bm\theta^{\T} \ln f \nabla f
+ \delta\bm\theta^{\T}\nabla f +\nonumber \\
&& +\frac{1}{2} \delta\bm\theta^{\T} \ln f
\nabla\nabla^{\T} f\,\delta\bm\theta + 
\delta\bm\theta^{\T} \frac{\nabla f\nabla^{\T} f}{f}\delta\bm\theta
+ \nonumber\\
&&+\frac{1}{2} \delta\bm\theta^{\T} f \left[
\frac{\nabla\nabla^{\T} f}{f} -\frac{\nabla f\nabla^{\T} f}{f^2}
\right]\delta\bm\theta,
\\
\psi &\approx&
f\ln f +\delta\bm\theta^{\T} \ln f \nabla f 
+\aver{\delta\bm\Theta}{}^{\T} \nabla f +\nonumber\\
&&+\frac{1}{2} \delta\bm\theta^{\T}\ln f \nabla\nabla^{\T}
f\,\delta\bm\theta
+ \delta\bm\theta^{\T} \frac{\nabla f\nabla^{\T}
f}{f}\aver{\delta\bm\Theta}{}+ \nonumber\\
&&+\frac{1}{2} \aver{\delta\bm\Theta}{}^{\T} f \left[
\frac{\nabla\nabla^{\T} f}{f} -\frac{\nabla f\nabla^{\T} f}{f^2}
\right] \aver{\delta\bm\Theta}{}.
\end{eqnarray}
We substitute these expansions into Eq.~(\ref{eq:klpp}), and by applying 
the regularity conditions~(\ref{eq:cdiff}) we have
\begin{multline}
\int_{\mathbf R}[\varphi-\psi]\,d\mathbf r
\approx
\frac{1}{2} \delta\bm\theta^{\T}
\mathbf J(\bm\theta_0|\mathbf R)
\delta\bm\theta 
-\\
-\delta\bm\theta^{\T}
\mathbf J(\bm\theta_0|\mathbf R)
\aver{\delta\bm\Theta}{}
+\frac{1}{2} \aver{\delta\bm\Theta}{}^{\T}
\mathbf J(\bm\theta_0|\mathbf R)
\aver{\delta\bm\Theta}{},
\label{eq:tsubs}
\end{multline}
where  we employed the definition~(\ref{eq:fi}) of Fisher
information matrix for  $\mathbf J(\bm\theta_0|\mathbf R)=\mathbf
J(\bm\theta|\mathbf R)|_{\bm\theta=\bm\theta_0}$.
Due to symmetry $\mathbf J(\bm\theta_0|\mathbf R)=[\mathbf
J(\bm\theta_0|\mathbf R)]^{\T}$ holds
\begin{equation}
\delta\bm\theta^{\T}
\mathbf J(\bm\theta_0|\mathbf R)
\aver{\delta\bm\Theta}{}
=
\frac{1}{2}\left[
\delta\bm\theta^{\T}
\mathbf J(\bm\theta_0|\mathbf R)
\aver{\delta\bm\Theta}{}
\!+\!
\aver{\delta\bm\Theta}{}^{\T}
\mathbf J(\bm\theta_0|\mathbf R)
\delta\bm\theta
\right],
\end{equation}
and so from Eq.~(\ref{eq:mutinfcm}) we have
\begin{equation}
I(\bm\Theta;\mathbf R)
\approx
\frac{1}{2}
\aver{\left[\delta\bm\theta-\aver{\delta\bm\Theta}{}\right]^{\T}
\mathbf J(\bm\theta_0|\mathbf R)
\left[\delta\bm\theta-\aver{\delta\bm\Theta}{}\right]}
{\delta\bm\theta}.
\label{eq:mi-s1}
\end{equation}
The covariance matrix $\mathbf C_{\delta\bm\Theta}$
of r.v. $\delta\bm\Theta$ is defined as
\begin{equation}
\mathbf C_{\delta\bm\Theta}=
\aver{\left[\delta\bm\theta-\aver{\delta\bm\Theta}{}\right]
\left[\delta\bm\theta-\aver{\delta\bm\Theta}{}\right]^{\T}}
{\delta\bm\theta},
\end{equation}
and obviously $\mathbf C_{\delta\bm\Theta}=\mathbf
C_{\delta\bm\Theta}^{\T}$. Since $\bm\theta_0$ is fixed,
and $\bm\Theta=\delta\bm\Theta+\bm\theta_0$, the
covariance matrices of r.v. $\bm\Theta$ and r.v. $\delta\bm\Theta$ are
equal, $\mathbf C_{\bm\Theta}=\mathbf C_{\delta\bm\Theta}$.
 Furthermore,
the law of matrix multiplication gives
$[\mathbf A\mathbf B]_{ik}=\sum_{j}[\mathbf A]_{ij}[\mathbf B]_{jk}$,
thus summing along $i=k$ gives the trace, i.e., 
$\tr(\mathbf A\mathbf B)= \sum_i[\mathbf A\mathbf B]_{ii}=
\sum_{i,j}[\mathbf A]_{ij}[\mathbf B]_{ji}$. Therefore,
Eq.~(\ref{eq:mi-s1}) can be written in a compact form as
Eq.~(\ref{eq:weakmi}).


%

\end{document}